\newtheorem{theo}[]{{\emph{Theorem}}}
\theoremstyle{remark}
\theoremstyle{definition}
\newtheorem{lem}{Lemma}
\newcommand{\Tr}{\mathrm{Tr}}
\newcommand{\GF}{\mathrm{GF}}
\newcommand{\de}{\delta}
\newcommand{\ra}{\rightarrow}
\begin{document}

\title{Binary sequences with three-valued cross correlations of different lengths}
 \maketitle
\begin{center}
\author{
 \text{Jinquan
\;\;Luo$^*$ } } \footnotetext{The author is with School of
Mathematics and Statistics, Central China Normal University, Wuhan,
Hubei, China 430079. This work is supported by NSFC under grant
11471008.
\par  E-mail addresses: luojinquan@mail.ccnu.edu.cn}
\end{center}

{\bf\it Abstract--} \textbf{
 In this paper, new pairs of binary sequences with three cross correlation values are presented. The cross correlation values  are shown to be low. Finally we present some numerical
 results and some open problems. }

\medskip
\quad\emph{Index terms--} \textbf{Binary sequence,  Correlation
distribution, Linearized polynomial, Rank}



%

\section{Introduction}

Sequences with low cross correlations have wide applications in many different communication systems.
Let $u=(u_t)_{t=0}^{n-1}$ and $v=(v_t)_{t=0}^{n-1}$ be two binary sequences with length $n$.
The cross correlation function of $a$ and $b$ with shift $\tau$ is defined by
\begin{equation}\label{def cross correlation}
C_{\tau}(u,v)=\sum\limits_{t=0}^{n-1}(-1)^{u_{t+\tau}+v_t}.
\end{equation}
The multiset $\left\{C_{\tau}(u,v)\,|\, 0\leq \tau\leq n-1\right\}$ is called cross correlation distribution of the sequences $u$ and $v$. For several decades,
sequences with low cross correlations have attracted special interests from different aspects. In many cases the sequence $v$ is a decimated sequence of $u$ with some
decimation $d$, that is, $v_t=u_{(dt)}$ for $0\leq t\leq n-1$ where the subscript $(dt)$ takes on the smallest non-negative integer equalling $dt$ module $n$.
If $d$ and $n$
are coprime, then $v$ will have the same length as $u$ and both these two sequences are called $m$-sequences of length $n$.
For cross correlation of two binary $m$-sequences, together with a survey on historical results, the readers are referred to

In general the decimation $d$ may be not comprime to $n$. Then the length of $v$ will become $n'=n/\gcd(n,d)$ which is a factor of $n$.
In this case we can consider the cross correlation
function of $a$ and $b$ in a similar way as (\ref{def cross correlation}). In , Ness and Helleseth studied the cross correlation between a
$m$-sequence $(u_t)$ of length $2^m-1$ and its
decimated sequence $(v_t)=(v_{(dt)})$ of length $2^{m/2}-1$ with $d=(2^{m/2}+1)^2/3$\, (here $m\equiv 2\;(\mathrm{mod}\, 4)$).
Later, the result is generalized to the case
$d=(2^{(m+2)/4}-1)(2^{m/2}+1)$ (see ).  In , they propose a conjecture that all three-valued cross correlation between two $m$-sequences of lengths $2^m-1$ and $2^{m/2}-1$ are characterized.

In this paper we will consider the cross correlation between two $m$-sequences of length $2^m-1$ and $3(2^{m/2}-1)$ with $m\equiv 2\;(\mathrm{mod}\, 4)$.
The cross correlation is proven to be three-valued. Our result does not violate this conjecture since the second sequence has length $3(2^{m/2}-1)$, not $2^{m/2}-1$.
Our result reveals  that if the length of the second sequence $b$ is more flexible, there maybe exist more sequences with three-valued cross correlation.

Precisely, let $k$, $l$ be odd integers coprime to each other. Let $m=2k$, $q=2^m$ and $g$ be a primitive element of $\GF(q)$.
 We will study cross correlation of sequences
 \begin{equation}\label{def u}
 u=(\Tr_m(g^{t}))_{t=0}^{q-2}
 \end{equation}
  and its decimated sequence
  \begin{equation}\label{def v}
 v=(\Tr_m(g^{dt}))_{t=0}^{q-2}
 \end{equation}
 with decimation $d=(2^{lk}+1)/(2^l+1)$. Then by (\ref{def cross correlation}), the
 cross correlation function with shift $\tau$ is
 \begin{equation}\label{def dem cor}
 C_{\tau}(u,v)=\sum\limits_{t=0}^{n-1}(-1)^{\Tr_m(g^{t+\tau}+g^{dt})}
 =\sum\limits_{x\in \GF(q)^*}(-1)^{\Tr_m(x^d+ax)}
 \end{equation}
 with $a=g^{\tau}$. Our main result is depicted in the sequel.

 \begin{theo}\label{main result}
The cross correlation distribution $C_d(\tau)$ when $\tau$ runs from
$0$ to $q-2$ is as follows.

\[   \begin{array}{ccccc}
     \mathrm{values} & & &\mathrm{frequencies}&\\[2mm]
      -1 &&& \frac{(2^k+1)(7\cdot 2^{k}+8)}{9}&\\[2mm]
      -1+2^{k+1} &&& \frac{(2^k+1)^2}{9} &\\[2mm]
      -1-2^{k+1} &&& \frac{(2^k+1)(2^{k}-2)}{9}.&\qquad \square
    \end{array}
\]
\end{theo}

This paper is organized as follows. In Section II we will present some preliminaries which is needed to study the sequences, and also we will develop connections among the
cross correlation function and two kinds of exponential sums $T(a,b)$ and $S(a,b)$. In Section III we will study
$T(a,b)$ which is a kind of exponential sums from binary quadratic forms. In Section IV we will prove our main result. Finally in Section V we make some conclusions and also
some open problems will be proposed.

\section{Preliminaries}

The following notations are fixed through the rest of the paper
except for specific statements.
\begin{itemize}
  \item Let $k$ and $l$ be two odd integers with $0<l<k$ and
  $\gcd(l,k)=1$. Let $d=\frac{2^{lk}+1}{2^l+1}$.
  \item Let $m=2k$, $q=2^m$. For positive integer $i$,  let
  $\GF(2^i)$ be the finite field with cardinality $2^i$.
  \item Let $g$ be a fixed primitive element of $\GF(q)$ and
  $r=g^{2^k-1}$.
  \item  For $j|i$, let $\Tr_{i/j}:\GF(2^i)\ra \GF(2^j)$ be the trace
  mapping defined by
  $\Tr_{i/j}(x)=x+x^{2^j}+x^{2^{2j}}+\cdots+x^{2^{i-j}}$. In particular, we
  use the notation $\Tr_{i}$ to replace $\Tr_{i/1}$ for
  abbreviation.
\end{itemize}

If we regard $\GF(q)$ as a vector space over $\GF(2)$ with dimension $m$, then $Q(x)=\Tr_m\left(\sum\limits_{i}a_ix^{p^i+1}\right)$
is a binary quadratic form of $m$ variables.   It is well-known that $Q(x)$ is equivalent to one of the following three standard forms(see \cite{Lidl Niederreiter},
 Theorem ):
\begin{itemize}
  \item[](Type I): $x_1x_2+x_3x_4+\cdots+x_{2v-1}x_{2v}$;
  \item[] (Type II): $x_1x_2+x_3x_4+\cdots+x_{2v-1}x_{2v}+x_{2v+1}^2$;
  \item[](Type III): $x_1x_2+x_3x_4+\cdots+x_{2v-1}x_{2v}+x_{2v-1}^2+x_{2v}^2$
\end{itemize}
where $2v$ is codimension of $\GF(2)$-vector space $V_m$ which is defined by
\[
V_m=\left\{x\in \GF(q)\,|\, Q(x+y)+Q(x)+Q(y)=0\;\text{for all}\; y\in \GF(q)\right\}.
\]
Then $\sum\limits_{x_\in \GF(2)^m}(-1)^{Q(x)}$ can be evaluated if we know which standard form $Q(x)$ lies in.
\begin{lem}\label{three forms}
\[
\sum\limits_{x_\in \GF(2)^m}(-1)^{Q(x)}=\left\{
\begin{array}{ll}
2^{m-v}, &\text{if}\; Q(x)\;\text{belongs to Type I};\\[2mm]
0, &\text{if}\; Q(x)\;\text{belongs to Type II};\\[2mm]
-2^{m-v}, &\text{if}\; Q(x)\;\text{belongs to Type III}.
\end{array}
\right.
\]
\end{lem}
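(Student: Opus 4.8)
The plan is to use the fact that the sum $\sum_{x\in\GF(2)^m}(-1)^{Q(x)}$ is invariant under any invertible $\GF(2)$-linear change of variables, since such a change merely permutes the points of $\GF(2)^m$ and hence permutes the summands. By the classification of binary quadratic forms quoted above, $Q$ is equivalent to one of the three standard forms, with $2v$ the codimension of $V_m$; after the substitution the exponential sum factors as a product over disjoint blocks of coordinates, and only a handful of elementary block-sums must be evaluated.

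First I would record the atomic evaluations over $\GF(2)$:
\[
\sum_{a,b\in\GF(2)}(-1)^{ab}=2,\qquad \sum_{a\in\GF(2)}(-1)^{a^2}=\sum_{a\in\GF(2)}(-1)^{a}=0,
\]
and, using $a^2=a$, $b^2=b$ in characteristic $2$,
\[
\sum_{a,b\in\GF(2)}(-1)^{ab+a^2+b^2}=\sum_{a,b\in\GF(2)}(-1)^{ab+a+b}=-2.
\]
Moreover each of the variables of $\GF(2)^m$ that does not occur in the standard form contributes a free factor of $2$ to the product.

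Then the three cases assemble at once. For Type I there are $v$ hyperbolic blocks $x_{2i-1}x_{2i}$ and $m-2v$ absent variables, so the sum equals $2^{v}\cdot 2^{m-2v}=2^{m-v}$. For Type II there are $v$ hyperbolic blocks, the single square $x_{2v+1}^2$, and $m-2v-1$ absent variables, so the sum equals $2^{v}\cdot 0\cdot 2^{m-2v-1}=0$. For Type III there are $v-1$ hyperbolic blocks, one anisotropic block $x_{2v-1}x_{2v}+x_{2v-1}^2+x_{2v}^2$, and $m-2v$ absent variables, so the sum equals $2^{v-1}\cdot(-2)\cdot 2^{m-2v}=-2^{m-v}$.

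The one point that calls for care, rather than routine computation, is to check that the integer $v$ produced by the normal-form reduction really coincides with the one defined in the statement through the radical $V_m$ — equivalently, that the number of variables genuinely appearing in the standard form is $2v$, $2v+1$, $2v$ in the three respective cases. This is precisely the content of the cited classification (the associated symmetric bilinear form $Q(x+y)+Q(x)+Q(y)$ has radical $V_m$, and in characteristic $2$ the pure square term of Type II lies in that radical), so once the classification is invoked the lemma follows directly from the block factorization above.
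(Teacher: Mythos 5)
Your proposal is correct and follows essentially the same route as the paper: split the standard form into disjoint coordinate blocks, evaluate the elementary block sums (hyperbolic block $=2$, single square $=0$, anisotropic block $=-2$), and multiply by $2$ for each absent variable. Your evaluation of the anisotropic block as $-2$ is the right value (the paper's prose says the last summation equals $-1$, a slip, though its displayed computation still yields the correct total), and your closing remark about matching the normal-form $v$ with the radical $V_m$ is a sensible extra check that the paper leaves implicit.
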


\begin{proof}
If $Q(x)$ belongs to Type I, then
\[\sum\limits_{x_\in \GF(2)^m}(-1)^{Q(x)}=2^{m-2v}\prod\limits_{i=1}^v\sum\limits_{x_{2i-1},x_{2i}=0}^1(-1)^{x_{2i-1}x_{2i}}=2^{m-2v}\cdot 2^v=2^{m-v}.\]

If $Q(x)$ belongs to Type II, then
\[\sum\limits_{x_\in \GF(2)^m}(-1)^{Q(x)}=2^{m-2v-1}\prod\limits_{i=1}^v\sum\limits_{x_{2i-1},x_{2i}=0}^1(-1)^{x_{2i-1}x_{2i}}\sum\limits_{x_{2v+1}=0}^1(-1)^{x_{2v+1}^2}=0\]
since the last summation equals to zero.

If $Q(x)$ belongs to Type III, then
\[\sum\limits_{x_\in \GF(2)^m}(-1)^{Q(x)}=2^{m-2v}\prod\limits_{i=1}^{v-1}\sum\limits_{x_{2i-1},x_{2i}=0}^1(-1)^{x_{2i-1}x_{2i}}\sum\limits_{x_{2v-1},x_{2v}=0}^1
(-1)^{x_{2v-1}x_{2v}+x_{2v-1}^2+x_{2v}^2}=-2^{m-v}\]
since the last summation equals to $-1$.
\end{proof}

Recall that $r=g^{2^k-1}$ and define $\de=r^d$.

\begin{lem}
The element $\de$ is  primitive in $\GF(4)$, that is,
$\de^2=\de+1=\de^{-1}$.
\end{lem}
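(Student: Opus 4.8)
The plan is to show that $\de$ has order exactly $3$ in the cyclic group $\GF(q)^*$. Since $m=2k$ is even, $\GF(4)$ is a subfield of $\GF(q)$, and $\GF(q)^*$ has a unique subgroup of order $3$, which is therefore $\GF(4)^*$; hence every element of order $3$ lies in $\GF(4)\setminus\{1\}$ and is a root of $x^2+x+1$, giving $\de^2=\de+1$ and, after multiplying by $\de$, $\de^{-1}=\de^2=\de+1$. So the whole statement reduces to an order computation. As $g$ is primitive and $\gcd(2^{2k}-1,2^k-1)=2^k-1$, the element $r=g^{2^k-1}$ has order $N:=(2^{2k}-1)/(2^k-1)=2^k+1$, and therefore $\de=r^d$ has order $N/\gcd(N,d)$. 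Thus it suffices to prove that $\gcd(2^k+1,\,d)=(2^k+1)/3$.

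First I would gather the arithmetic facts. Since $k$ is odd, $2^k+1\equiv 0\pmod 3$, so $3\mid N$; since $l$ is odd, $x+1$ divides $x^l+1$, so both $2^k+1$ and $2^l+1$ divide $2^{lk}+1$, which shows that $d$ is an integer and that $N\mid d(2^l+1)$. Next, as $\gcd(l,k)=1$ with $l$ and $k$ both odd, the standard evaluation of $\gcd(2^a+1,2^b+1)$ gives $\gcd(2^l+1,\,2^k+1)=2^{\gcd(l,k)}+1=3$. Finally, to pin down the $3$-part I would use the lifting-the-exponent lemma in the form $v_3(2^n+1)=1+v_3(n)$ for odd $n$ (here $v_3$ denotes the $3$-adic valuation); applying it to $n=l,\,k,\,lk$ yields $v_3(d)=v_3(2^{lk}+1)-v_3(2^l+1)=v_3(k)$, while $v_3(N)=1+v_3(k)$.

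To conclude, I would compare valuations prime by prime. At the prime $3$: $v_3(\gcd(N,d))=\min\{1+v_3(k),v_3(k)\}=v_3(k)=v_3(N/3)$. At a prime $p\neq 3$: if $p\nmid 2^l+1$ then $v_p(d)=v_p(2^{lk}+1)\geq v_p(2^k+1)=v_p(N)$ because $2^k+1\mid 2^{lk}+1$; and if $p\mid 2^l+1$ then $\gcd(2^l+1,2^k+1)=3$ forces $v_p(N)=0$, so again $v_p(d)\geq v_p(N)$. In every case $v_p(\gcd(N,d))=v_p(N)=v_p(N/3)$, so $\gcd(N,d)=(2^k+1)/3$, $\de$ has order $3$, and the lemma follows from the first paragraph. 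One may equally avoid the exact gcd and check directly that $\de^3=1$ and $\de\neq 1$: the former is $(2^k+1)\mid 3d$, which the same valuation estimates give at once, and the latter is $(2^k+1)\nmid d$, which holds because $v_3(d)=v_3(k)<1+v_3(k)=v_3(2^k+1)$.

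I expect the only delicate point to be the behaviour of the factor $3$ when $3\mid k$, in which case $9\mid 2^k+1$ and one must check that $d$ still carries exactly the factor $(2^k+1)/3$. Lifting-the-exponent settles this uniformly; avoiding it would require a slightly fussier induction on $v_3(k)$.
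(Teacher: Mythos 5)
Your proof is correct. It shares its essential number-theoretic core with the paper's argument, namely the valuation identity $v_3(2^f+1)=1+v_3(f)$ for odd $f$ (which the paper proves by an explicit induction rather than invoking lifting-the-exponent), but the two proofs are organized differently. You compute the exact order of $\de$ by first noting $\mathrm{ord}(r)=2^k+1$ and then determining $\gcd(2^k+1,d)=(2^k+1)/3$ via a prime-by-prime comparison of valuations, which requires the auxiliary facts $\gcd(2^l+1,2^k+1)=3$ and $2^k+1\mid 2^{lk}+1$. The paper instead splits the claim into $\de^3=1$ and $\de\neq 1$: the first follows almost for free from the identity $\de^{2^l+1}=g^{(2^{2k}-1)(2^{lk}+1)/(2^k+1)}=1$ together with $\gcd(2^l+1,2^{2k}-1)=3$, and only the second requires the $3$-adic valuation lemma (to derive a contradiction from $(2^k+1)(2^l+1)\mid 2^{lk}+1$). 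The paper's route is shorter because it never needs the full gcd; your route is more systematic and yields the order of $\de$ as a byproduct of a general recipe, and your closing remark about checking $(2^k+1)\mid 3d$ and $(2^k+1)\nmid d$ directly essentially recovers the paper's structure. Your identification of the behaviour of the prime $3$ when $3\mid k$ as the one delicate point is exactly right --- that is the case the paper's valuation computation is designed to handle.
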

\begin{proof}
We can calculate
$\de^{2^l+1}=g^{(2^k-1)(2^{lk}+1)}=g^{(2^{2k}-1)\cdot
\frac{2^{lk}+1}{2^k+1}}=1.$ By $\gcd(2^l+1,2^{2k}-1)=3$ we can
deduce $\de^3=1$.

It remains to show that $\de\neq 1$. Otherwise, by
$\de=g^{(2^k-1)\frac{2^{lk}+1}{2^l+1}}=1$ we have \[2^{2k}-1\mid
(2^k-1)\frac{2^{lk}+1}{2^l+1}\] which implies
\[2^k+1\mid \frac{2^{lk}+1}{2^l+1}.\]
So we can deduce
\begin{equation}\label{cont}
(2^k+1)(2^l+1)\mid 2^{lk}+1.
\end{equation}

 Denote by $v_3(n)$ the highest power of $3$ dividing $n$, that is, $n=3^{v_3(n)}n'$ with $n'$ coprime to $3$.
 For $n$ coprime to $a$, define
 \[\mathrm{ord}_n(a)=\min\{s>0|a^s\equiv1\;(\mathrm{mod}\, n)\}.\]
 Then $\mathrm{ord}_9(2)=6$.

 Then for odd $f$, we have
$v_3(2^f+1)\geq 1$ and then
\[
\begin{array}{rcl}
v_3(2^{3f}+1)&=&v_3((2^{f}+1)(2^{2f}-2^f+1))\\[2mm]
 &=&v_3(2^{f}+1)+v_3((2^{f}+1)^2-3\cdot 2^f)=v_3(2^{f}+1)+1
\end{array}
\]
where the last equality from $v_3((2^{f}+1)^2)\geq 2$ and $v_3(3\cdot 2^f)=1$. Then by induction we obtain
\begin{equation}\label{v3 2f}
v_3(2^f+1)=v_3(2^{3^{v_3(f)}f'}+1)=v_3(f)+v_3(2^{f'}+1).
\end{equation}
Since odd $f'$ is not divisible by $3$, then we can deduce $2^{f'}+1\not\equiv 0\;(\mathrm{mod}\, 9)$.  Otherwise
 $2^{2f'}\equiv 1\;(\mathrm{mod}\, 9)$
 which implies $6\mid 2f'$ and $3\mid f'$.
 It is a contradiction.
  Therefore $v_3(2^{f'}+1)=1$
 and then by (\ref{v3 2f}) we
 have $v_3(2^f+1)=v_3(f)+1.$
Then by (\ref{cont}) we obtain $v_3((2^k+1)(2^l+1))=1+v_3(k)+1+v_3(l)\leq v_3(2^{lk}+1)=1+v_3(lk)=1+v_3(l)+v_3(k)$
which leads to a contradiction.

As a result, we can deduce $\de\in \GF(4)\backslash\{0,1\}$ and then the result follows.

\end{proof}

By (\ref{def dem cor}) we obtain that
\begin{equation}
C_{\tau}(u,v)=S(a)-1
\end{equation}
with $a=g^{\tau}\in \GF(q)^*$ and
\begin{equation}\label{defn S}
S(a)=\sum\limits_{x\in \GF(q)}(-1)^{\Tr_m\left(x^d+ax\right)}.
\end{equation}

Then we turn to study the binary exponential sum $S(a)$. Firstly we observe that $\gcd(2^l+1,2^m-1)=3$ and
$(2^l+1)d=2^{lk}+1\equiv 2^k+1\pmod{2^m-1}$.  Since $r=g^{2^k-1}$ is
a noncube in $\GF(q)$, then images of the mappings $x\mapsto
x^{2^l+1}$, $x\mapsto rx^{2^l+1}$ and $x\mapsto r^{-1} x^{2^l+1}$
(all the three maps are from $\GF(q)$ to $\GF(q)$) covers each
element of $\GF(q)$ exactly three times when $x$ runs through $\GF(q)$. Therefore
\begin{eqnarray}\label{three one}
&3S(a)=\sum\limits_{x\in
\GF(q)}(-1)^{\Tr_m\left(x^{2^k+1}+ax^{2^l+1}\right)}+\sum\limits_{x\in
\GF(q)}(-1)^{\Tr_m\left(\de x^{2^k+1}+rax^{2^l+1}\right)}\nonumber\\[2mm]
&\qquad\qquad\qquad+\sum\limits_{x\in \GF(q)}(-1)^{\Tr_m\left(\de^{-1}
x^{2^k+1}+r^{-1}ax^{2^l+1}\right)}
\end{eqnarray}
where $\de=r^{d}.$

We observe that $x^{2^k+1}\in \GF(2^k)$ and $\de+\de^{2^k}=1$. Hence $\Tr_m(x^{2^k+1})=0$ and $\Tr_m(\de x^{2^k+1})=\Tr_k((\de+\de^{2^k})x^{2^k+1})=\Tr_k(x^{2^k+1})$.
In the same way $\Tr_m(\de^{-1} x^{2^k+1})=\Tr_k(x^{2^k+1})$.
It follows from (\ref{three one}) that
\begin{equation}\label{first form S}
3S(a)=T(a,0)+T(ra, \de)++T(r^{-1}a, \de)
\end{equation}
where
\begin{equation}\label{defn T}
T(a,b)=\sum\limits_{x\in
\GF(q)}(-1)^{\Tr_m\left(ax^{2^l+1}+bx^{2^k+1}\right)}.
\end{equation}

In order to evaluate $T(a,b)$, it is sufficient to study quadratic form
\begin{equation}\label{defn quadratic form}
Q_{a,b}(x):=\Tr_m\left(ax^{2^l+1}+bx^{2^k+1}\right).
\end{equation}

Define
\begin{equation}\label{defn Labx}
L_{a,b}(x)=a^{2^{2k-2l}}x^{2^{2k-2l}}+(b^{2^{2k-l}}+b^{2^{k-l}})x^{2^{k-l}}+a^{2k-l}x
\end{equation}
and
\begin{equation}\label{defn Vm(a,b)}
V_m(a,b)=\left\{x\in \GF(q)\,|\, Q_{a,b}(x+y)+Q_{a,b}(x)+Q_{a,b}(y)=0\;\text{for all}\; y\in \GF(q)\right\}.
\end{equation}

Now we turn to study $V_m(a,b)$ which can be formulated as follows.
\[
\begin{array}{ll}
&\dim_{\GF(2)}V_m(a,b)=m-2v \Longleftrightarrow \text{for all}\; y\in \GF(q),Q_{a,b}(x+y)+Q_{a,b}(x)+Q_{a,b}(y)=0\\[2mm]
& \qquad\qquad \qquad\qquad\qquad\qquad\qquad\text{has}\;2^{m-2v}\;\text{common solutions}\;x\in \GF(q)\\[2mm]
&\qquad\qquad \Longleftrightarrow \text{for all}\; y\in \GF(q),\Tr_m(ax^{2^l}y+axy^{2^l}+bx^{2^k}y+bxy^{2^k})=0\;\text{has}\;2^{m-2r}\\[2mm]
&\qquad\qquad \qquad\text{common solutions}\;x\in \GF(q)\\[2mm]
&\qquad\qquad \Longleftrightarrow \text{for all}\; y\in \GF(q),\Tr_m\left(y^{2^{2k-l}}(a^{2k-l}x+a^{2^{2k-2l}}x^{2^{2k-2l}}+(b^{2^{2k-l}}+b^{2^{k-l}})x^{2^{k-l}})\right)=0\\[2mm]
&\qquad\qquad \qquad\text{has}\;2^{m-2v}\;\text{common solutions}\;x\in \GF(q)\\[2mm]
&\qquad\qquad \Longleftrightarrow L_{a,b}(x)=a^{2^{2k-2l}}x^{2^{2k-2l}}+(b^{2^{2k-l}}+b^{2^{k-l}})x^{2^{k-l}}+a^{2k-l}x=0\\[2mm]
&\qquad\qquad\qquad\text{has}\;2^{m-2v}\;\text{solutions}\;x\in \GF(q).
\end{array}
\]

Therefore
\[V_m(a,b)=\left\{x\in \GF(q)\,|\,L_{a,b}(x)=0\right\}.\]
Note that $\gcd(k-l,m)=2$ and $V_m(a,b)$ is also a $\GF(2^{m})\cap \GF(2^e)=\GF(4)$-linear space. Then we can determine the possible dimensions of $V_m(a,b)$.

\begin{lem}
For any $a\in \GF(q)^*$, the dimension of  $\GF(4)$-linear space
$V_{m}(a,b)$ is at most 2.
\end{lem}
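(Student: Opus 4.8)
The plan is to exploit the fact, established just above the statement, that $V_m(a,b)$ is the kernel of the linearized polynomial $L_{a,b}(x)=a^{2^{2k-2l}}x^{2^{2k-2l}}+(b^{2^{2k-l}}+b^{2^{k-l}})x^{2^{k-l}}+a^{2^{2k-l}}x$, together with the observation that $V_m(a,b)$ is an $\GF(4)$-subspace of $\GF(q)$. Since $a\neq 0$, the polynomial $L_{a,b}$ is not identically zero, so $V_m(a,b)$ is a proper subspace; the whole task is to show its $\GF(2)$-dimension is at most $4$, equivalently its $\GF(4)$-dimension is at most $2$. I would not try to count roots of $L_{a,b}$ directly (its degree $2^{2k-2l}$ is too large); instead I would reduce the equation $L_{a,b}(x)=0$ to a low-degree equation over a small field.

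First I would use that $\gcd(k-l,m)=\gcd(k-l,2k)=2$ (both $k,l$ odd, $\gcd(k,l)=1$), so raising to the power $2^{k-l}$ is a generator of the Galois action on $\GF(q)/\GF(4)$, and $V_m(a,b)$ being $\GF(4)$-linear means it is closed under $x\mapsto x^{2^{k-l}}$ (indeed $x\mapsto x^4$). Next I would substitute $y=x^{2^{k-l}}$ or, more efficiently, compose $L_{a,b}$ with another linearized polynomial (or apply a Frobenius twist) to turn the three-term equation into one where all exponents of $x$ lie in a bounded range modulo the period $2$. Concretely, writing $e=k-l$ and noting $2e\equiv \gcd$-type reductions, the equation $L_{a,b}(x)=0$ should be convertible, after multiplying by a suitable power of $x^{2^i}$ and applying Frobenius, into an $\GF(2)$-linear relation among $x, x^{2^e}, x^{2^{2e}}$ only — i.e. a three-term linearized polynomial of $2^e$-degree $2$. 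A linearized polynomial that is a $\GF(2^e)$-linear (here $\GF(4)$-linear after accounting for $\gcd(e,m)=2$) map of the form $\alpha x^{q_0^2}+\beta x^{q_0}+\gamma x$ over a field, with $q_0=2^e$, has kernel of $\GF(q_0)$-dimension (hence $\GF(4)$-dimension) at most $2$, because a nonzero such polynomial has at most $q_0^2$ roots and they form a $\GF(q_0\cap\GF(q))$-space. That bound of $2$ is exactly the claim.

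The main obstacle I anticipate is the bookkeeping in the first reduction step: one must carefully track the exponents $2^{2k-2l}, 2^{k-l}, 1$ modulo the relevant period and verify that after composing with Frobenius (or with $L_{a,b}$ itself, using that $\GF(4)$-linearity forces a functional relation) the resulting polynomial genuinely has $2^{k-l}$-degree $2$ and does not collapse or blow up — in particular that the leading coefficient $a^{2^{2k-2l}}$ does not vanish, which is immediate from $a\neq 0$, but the intermediate coefficients need checking. A clean alternative, which I would fall back on if the direct manipulation is messy, is: since $V_m(a,b)$ is an $\GF(4)$-space stable under $x\mapsto x^4$ and contained in $\GF(q)=\GF(4^k)$, if $\dim_{\GF(4)}V_m(a,b)\geq 3$ then $|V_m(a,b)|\geq 64$, and one derives a contradiction by showing $L_{a,b}$ — viewed appropriately, say after the substitution $z=x/x^{2^{k-l}}$ reducing to a norm-type or Artin–Schreier-type equation over $\GF(4)$ — cannot have that many roots; this again hinges on exhibiting $L_{a,b}(x)=0$ as equivalent to a degree-$2$ (in the $2^{k-l}$-power sense) condition.
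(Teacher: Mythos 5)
Your overall strategy is the same as the paper's: view $L_{a,b}$ as a $2^e$-linearized polynomial of $2^e$-degree $2$, where $e=k-l$, and bound its kernel. Two comments. First, the ``reduction'' you spend most of your effort worrying about is unnecessary: with $e=k-l$ the exponents appearing in $L_{a,b}$ are already exactly $2^{2e}$, $2^{e}$ and $1$, so $L_{a,b}(x)=\alpha x^{2^{2e}}+\beta x^{2^{e}}+\gamma x$ with $\alpha=a^{2^{2e}}\neq 0$ and $\gamma\neq 0$ from the outset; no Frobenius twist, substitution, or bookkeeping is needed.

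Second, and this is a genuine gap, your final counting step does not prove the stated bound. You conclude that the kernel has $\GF(4)$-dimension at most $2$ ``because a nonzero such polynomial has at most $q_0^2$ roots and they form a $\GF(q_0\cap\GF(q))$-space.'' The roots inside $\GF(q)$ do form a $\GF(4)$-space of cardinality at most $q_0^2=4^{e}$, but a $\GF(4)$-space of that cardinality can have $\GF(4)$-dimension up to $e$, not $2$ (for instance $k=5$, $l=1$ gives $e=4$, and the cardinality bound alone permits dimension $4$). The missing ingredient is the one the paper supplies: pass to an algebraic closure, where the full root set $V_{\infty}(a,b)$ is a $\GF(2^{e})$-vector space of dimension at most $2$ (it has at most $2^{2e}$ elements and is closed under $\GF(2^{e})$-scaling because the polynomial is $2^{e}$-linearized), and then use that elements of $\GF(q)$ which are linearly independent over $\GF(4)=\GF(2^{\gcd(e,m)})$ remain linearly independent over $\GF(2^{e})$, by linear disjointness of $\GF(q)$ and $\GF(2^{e})$ over their intersection. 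This gives $\dim_{\GF(4)}V_m(a,b)=\dim_{\GF(4)}\bigl(V_{\infty}(a,b)\cap\GF(q)\bigr)\le\dim_{\GF(2^{e})}V_{\infty}(a,b)\le 2$. Without that disjointness step your argument only yields the bound $e$, which is too weak to be of any use in the rest of the paper.
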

\begin{proof}
Fix an algebraic closure $\GF({2^\infty})$ of $\GF(2)$, since
the degree of $\GF(4)$-linearized polynomial $L_{a,b}(x)$ is
$2^{2e}$ and $L_{a,b}(x)=0$ has no multiple roots in
$\GF({2^\infty})$,  then the zeroes of $L_{a, b}(x)$ in
$\GF({p^\infty})$, say $V_{\infty}(a,b)$, form an $\GF(2^e)$-vector
space of dimension 2. Note that $\gcd(e,m)=2$. Then
$V_m(a,b)=V_{\infty}(a,b)\cap \GF(2^m)$ is a vector space on
$\GF(2^{\gcd(e,m)})=\GF(4)$ with dimension at most 2 since any
elements in $\GF(2^m)$ which are linear independent over $\GF(4)$
are also linear independent over $\GF(2^e)$.
\end{proof}

The following binary exponential sum will be useful(see \cite{Lidl Niederreiter}, \cite{Moisio}).
\begin{lem}\label{primitive}
For $h\mid 2^k+1$, we have
\[
\sum\limits_{x\in \GF(q)}(-1)^{\Tr_m(ax^h)}=\left\{
\begin{array}{ll}
(h-1)2^k,&\text{if}\; a=g^{hi}\;\text{for some}\;i,\\[2mm]
-2^k,&\text{otherwise}.\qquad\square
\end{array}
\right.
\]
\end{lem}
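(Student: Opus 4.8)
The plan is to collapse the sum to a character sum over the multiplicative subgroup $H$ of $h$-th powers in $\GF(q)^*$, and then exploit the subfield structure that $h\mid 2^k+1$ forces. Write $H=\{x^h:x\in\GF(q)^*\}$. Since $h\mid 2^k+1\mid q-1$, the map $x\mapsto x^h$ is $h$-to-one from $\GF(q)^*$ onto $H$, so
\[
\sum_{x\in\GF(q)}(-1)^{\Tr_m(ax^h)}=1+h\sum_{y\in H}(-1)^{\Tr_m(ay)},
\]
and the hypothesis ``$a=g^{hi}$ for some $i$'' is precisely the statement $a\in H$. Thus it suffices to evaluate $\Sigma(a):=\sum_{y\in H}(-1)^{\Tr_m(ay)}$ in the two cases $a\in H$ and $a\notin H$.

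Next I would isolate the relevant subfield geometry. Because $\gcd(2^k+1,2^k-1)=1$ and $h\mid 2^k+1$, we have $\gcd(h,2^k-1)=1$, hence $2^k-1\mid |H|=(q-1)/h$; therefore $\GF(2^k)^*\subseteq H$, and, writing $h':=(2^k+1)/h$, one checks $[H:\GF(2^k)^*]=h'$, so $H$ is a disjoint union of $h'$ cosets $\gamma_0\GF(2^k)^*,\dots,\gamma_{h'-1}\GF(2^k)^*$ (for instance $\gamma_j=g^{hj}$). I would then compute, for an arbitrary $w\in\GF(q)$, the inner sum over the subfield: using $\Tr_m=\Tr_k\circ\Tr_{2k/k}$ together with $\Tr_{2k/k}(wz)=\Tr_{2k/k}(w)\,z$ for $z\in\GF(2^k)$ (valid since $z^{2^k}=z$), one gets
\[
\sum_{z\in\GF(2^k)^*}(-1)^{\Tr_m(wz)}=
\begin{cases}
2^k-1, & \Tr_{2k/k}(w)=0\ \text{(equivalently }w\in\GF(2^k)\text{)},\\
-1, & \text{otherwise}.
\end{cases}
\]

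Finally I would assemble the pieces. Breaking $H$ into its $h'$ cosets of $\GF(2^k)^*$,
\[
\Sigma(a)=\sum_{j=0}^{h'-1}\ \sum_{z\in\GF(2^k)^*}(-1)^{\Tr_m(a\gamma_j z)},
\]
and the $j$-th inner sum equals $2^k-1$ exactly when $a\gamma_j\in\GF(2^k)^*$, i.e. when $a$ lies in the coset $\gamma_j^{-1}\GF(2^k)^*$, and equals $-1$ otherwise. As $j$ runs over $0,\dots,h'-1$ the cosets $\gamma_j^{-1}\GF(2^k)^*$ are exactly the $h'$ cosets of $\GF(2^k)^*$ filling up $H$; hence there is precisely one $j$ with $a\gamma_j\in\GF(2^k)^*$ when $a\in H$, and no such $j$ when $a\notin H$. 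Therefore $\Sigma(a)=(2^k-1)-(h'-1)=2^k-h'$ if $a\in H$, and $\Sigma(a)=-h'$ if $a\notin H$. Substituting into the first display and using $hh'=2^k+1$ gives $1+h(2^k-h')=(h-1)2^k$ in the first case and $1-hh'=-2^k$ in the second, as claimed. The only step needing care is the middle paragraph's bookkeeping — that the index $[H:\GF(2^k)^*]$ really is $h'$ and that the chosen representatives exhaust $H$ — but this is elementary once $\gcd(h,2^k-1)=1$ is in hand; in particular no Gauss-sum evaluation or other analytic input is required.
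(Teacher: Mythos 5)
Your argument is correct; I checked each step, and in particular the final bookkeeping $1+h(2^k-h')=(h-1)2^k$ and $1-hh'=-2^k$ comes out right (and agrees with the lemma in the test case $k=1$, $h=3$). Note, however, that the paper does not prove this lemma at all: it is stated with a $\square$ and attributed to Lidl--Niederreiter and Moisio, where the standard derivation expands the sum in multiplicative characters of order dividing $h$ and evaluates the resulting Gauss sums, which all equal $2^k$ because $h\mid 2^k+1$ puts one in the semiprimitive (pure Gauss sum) case. Your route is genuinely different and entirely elementary: you reduce to the subgroup $H$ of $h$-th powers, observe that $\gcd(h,2^k-1)=1$ forces $\GF(2^k)^*\subseteq H$ with index $h'=(2^k+1)/h$, and evaluate the additive sum coset by coset using only the transitivity $\Tr_m=\Tr_k\circ\Tr_{2k/k}$ and the fact that $\Tr_{2k/k}(w)=0$ iff $w\in\GF(2^k)$. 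What this buys is a self-contained proof requiring no character theory or Gauss-sum evaluation; what the cited approach buys is generality (it handles arbitrary $h\mid q-1$ once the Gauss sums are known, whereas your subfield decomposition uses the hypothesis $h\mid 2^k+1$ in an essential structural way). One tiny presentational caveat: the inner-sum criterion should be stated as $a\gamma_j\in\GF(2^k)$, which for nonzero $a\gamma_j$ is the same as $a\gamma_j\in\GF(2^k)^*$ as you use it; and the lemma implicitly assumes $a\neq 0$, as does your proof, which is consistent with its use in the paper where $a=g^\tau$.
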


Now we introduce some moment identities to determine the occurrences
of all possible values of $S(a)$.

\begin{lem}\label{moment}
For $S(a)$ defined in (\ref{defn S}), we have

 (i). $\sum\limits_{a\in \GF(q)^*}S(a)=\frac{2^k+1}{3}\cdot 2^{k+1}$.

 (ii). $\sum\limits_{a\in \GF(q)^*}S(a)^2=\frac{2^{2k+2}(2^k+1)(2^{k+1}-1)}{9}$.
\end{lem}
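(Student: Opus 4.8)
\noindent\emph{Proof idea.} The plan is to obtain both moments by interchanging the order of summation, thereby reducing them to the single exponential sum $S(0)=\sum_{x\in\GF(q)}(-1)^{\Tr_m(x^d)}$. Throughout I would use the orthogonality relation $\sum_{a\in\GF(q)^*}(-1)^{\Tr_m(ab)}=q-1$ if $b=0$ and $=-1$ otherwise, an immediate consequence of $\sum_{a\in\GF(q)}(-1)^{\Tr_m(ab)}\in\{q,0\}$. For the first moment, swapping the sums in (\ref{defn S}) gives
\[
\sum_{a\in\GF(q)^*}S(a)=\sum_{x\in\GF(q)}(-1)^{\Tr_m(x^d)}\sum_{a\in\GF(q)^*}(-1)^{\Tr_m(ax)}=(q-1)+(-1)\bigl(S(0)-1\bigr)=q-S(0),
\]
because the term $x=0$ contributes $q-1$ while each term $x\ne0$ contributes $-1$. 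For the second moment, expanding the square and swapping sums gives
\[
\sum_{a\in\GF(q)^*}S(a)^2=\sum_{x,y\in\GF(q)}(-1)^{\Tr_m(x^d+y^d)}\sum_{a\in\GF(q)^*}(-1)^{\Tr_m(a(x+y))};
\]
in characteristic $2$ the inner sum equals $q-1$ precisely when $y=x$ (and then $x^d+y^d=0$) and $-1$ otherwise, so peeling off the diagonal yields $\sum_{a\in\GF(q)^*}S(a)^2=(q-1)q-\bigl(S(0)^2-q\bigr)=q^2-S(0)^2$.

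It then remains to evaluate $S(0)$. I would set $a=0$ in (\ref{first form S}), which gives $3S(0)=T(0,0)+2\,T(0,\de)$ with $T$ as in (\ref{defn T}). Here $T(0,0)=q=2^{2k}$, while by the identity $\Tr_m(\de x^{2^k+1})=\Tr_k(x^{2^k+1})$ recorded just before (\ref{first form S}) we have $T(0,\de)=\sum_{x\in\GF(q)}(-1)^{\Tr_k(x^{2^k+1})}$; since $x\mapsto x^{2^k+1}$ is $(2^k+1)$-to-$1$ from $\GF(q)^*$ onto $\GF(2^k)^*$ (equivalently, apply Lemma \ref{primitive} with $h=2^k+1$, noting that $\de$ is not a $(2^k+1)$-th power because $3\nmid 2^k-1$), this sum equals $1+(2^k+1)(-1)=-2^k$. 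Hence $S(0)=\dfrac{2^{2k}-2^{k+1}}{3}$, which is an integer since $3\mid 2^{k-1}-1$ for $k$ odd. Substituting $q=2^{2k}$ and this value of $S(0)$ into $q-S(0)$ and $q^2-S(0)^2$ and simplifying — for the second moment using $9\cdot 2^{4k}-(2^{2k}-2^{k+1})^2=2^{2k+2}(2^{k+1}-1)(2^k+1)$, which rests on the elementary factorization $2^{2k+1}+2^k-1=(2^{k+1}-1)(2^k+1)$ — produces exactly the two claimed values.

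I do not expect any genuine obstacle here: the computation is driven entirely by orthogonality and by the character-sum evaluations already available in the excerpt. The only points requiring mild care are the characteristic-$2$ collapse $x^d+x^d=0$ in the diagonal of the second moment, the correct bookkeeping of the diagonal versus off-diagonal contributions, and the integrality and final algebraic simplification.
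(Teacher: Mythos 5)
Your proof is correct and takes essentially the same route as the paper: both moments are reduced via orthogonality of $a\mapsto(-1)^{\Tr_m(ax)}$ to the single sum $S(0)=\sum_{x\in\GF(q)}(-1)^{\Tr_m(x^d)}$, with the same diagonal/off-diagonal bookkeeping (and the same use of $x^d+x^d=0$) for the second moment. The only immaterial difference is that you evaluate $S(0)$ through the decomposition $3S(0)=T(0,0)+2T(0,\de)$, whereas the paper invokes Lemma \ref{primitive} with $h=(2^k+1)/3$ directly; both yield $S(0)=2^k(2^k-2)/3$ and hence the stated values.
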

\begin{proof}
(i). We calculate
\[\begin{array}{ll}
&\sum\limits_{a\in \GF(q)^*}S(a)=\sum\limits_{x\in
\GF(q)}(-1)^{\Tr_m\left(x^d\right)}\sum\limits_{a\in
\GF(q)^*}(-1)^{\Tr_m\left(ax\right)}\\[2mm]
&\qquad\qquad\quad =2^{2k}-1-\sum\limits_{x\in
\GF(q)^*}(-1)^{\Tr_m\left(x^d\right)}\\[2mm]
&\qquad\qquad\quad=2^{2k}-1-\left(\left(\frac{2^k+1}{3}-1\right)2^k-1\right)=\frac{2^k+1}{3}\cdot
2^{k+1}.
\end{array}\]

(ii). We obtain
\[\begin{array}{ll} &\sum\limits_{a\in
\GF(q)^*}S(a)^2=\sum\limits_{x,y\in
\GF(q)}(-1)^{\Tr_m\left(x^d+y^d\right)}\sum\limits_{a\in
\GF(q)^*}(-1)^{\Tr_m\left(a(x+y)\right)}\\[2mm]
&\qquad\qquad\quad
=\left(2^{2k}-1\right)\sum\limits_{x=y}(-1)^{\Tr_m\left(x^d+y^d\right)}-\sum\limits_{x\ne
y}(-1)^{\Tr_m\left(x^d+y^d\right)}.
\end{array}\]
Denote by $A=\sum\limits_{x=y}(-1)^{\Tr_m\left(x^d+y^d\right)}$ and
$B=\sum\limits_{x\ne y}(-1)^{\Tr_m\left(x^d+y^d\right)}$. Then
$A=2^{2k}$ and
\[A+B=\sum\limits_{x,y}(-1)^{\Tr_m\left(x^d+y^d\right)}=\left(\sum\limits_{x\in \GF(q)}(-1)^{\Tr_m\left(x^d\right)}\right)^2=\frac{2^{2k}(2^k-2)^2}{9}.\]
Therefore $B=2^{2k}(2^k-5)(2^k+1)/9$ and
\[\sum\limits_{a\in
\GF(q)^*}S(a)^2=2^{2k}\left(2^{2k}-1\right)-\frac{2^{2k}(2^k-5)(2^k+1)}{9}=\frac{2^{2k+2}(2^k+1)(2^{k+1}-1)}{9}.\]
\end{proof}

\section{On binary exponential sum $T(a,b)$}

For any $x\in \GF(q)^*$ and  $0\leq i\leq 2$, it is easy to see
$(\de^ix)^{2^l+1}=x^{2^l+1}$ and $(\de^ix)^{2^k+1}=x^{2^k+1}$. Hence
\[\Tr_m\left(a(\de^ix)^{2^l+1}+b(\de^ix)^{2^k+1}\right)=\Tr_m\left(ax^{2^l+1}+bx^{2^k+1}\right).\]
As a consequence,
\begin{lem}\label{T(a,b) equiv 1 mod 3}
For any $a,b\in \GF(q)$, we have $T(a,b)\equiv 1\pmod{3}.$
\end{lem}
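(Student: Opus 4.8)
The plan is to exploit the action of $\langle\de\rangle$ on $\GF(q)^*$ that is already highlighted in the paragraph immediately preceding the statement. By the earlier lemma, $\de$ is primitive in $\GF(4)$, so it has multiplicative order $3$, and multiplication by $\de$ permutes $\GF(q)^*$. The identity displayed just above the lemma, namely
\[
\Tr_m\!\left(a(\de^i x)^{2^l+1}+b(\de^i x)^{2^k+1}\right)=\Tr_m\!\left(ax^{2^l+1}+bx^{2^k+1}\right)=Q_{a,b}(x),
\]
says precisely that $x\mapsto(-1)^{Q_{a,b}(x)}$ is constant on each orbit of the cyclic group $\langle\de\rangle$ acting on $\GF(q)$ by multiplication.

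Next I would determine the orbit sizes on $\GF(q)^*$. Every orbit $\{x,\de x,\de^2 x\}$ has size dividing $3$, hence size $1$ or $3$; and size $1$ would force $\de x=x$, i.e. $(\de-1)x=0$, which is impossible for $x\neq 0$ since $\de\neq 1$. Hence every orbit contained in $\GF(q)^*$ has size exactly $3$. Grouping the terms of $\sum_{x\in\GF(q)^*}(-1)^{Q_{a,b}(x)}$ according to these orbits writes it as a sum of contributions, each of the form $3\cdot(\pm1)$, so this partial sum is divisible by $3$.

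Finally, isolating the $x=0$ term, for which $Q_{a,b}(0)=0$, gives
\[
T(a,b)=\sum_{x\in\GF(q)}(-1)^{Q_{a,b}(x)}=1+\sum_{x\in\GF(q)^*}(-1)^{Q_{a,b}(x)}\equiv 1\pmod 3,
\]
which is the assertion. I do not expect any real obstacle here; the only point worth keeping in view is that the invariance identity $(\de^i x)^{2^l+1}=x^{2^l+1}$ and $(\de^i x)^{2^k+1}=x^{2^k+1}$ ultimately rests on $\de^3=1$ together with $3\mid 2^l+1$ and $3\mid 2^k+1$ (valid because $k$ and $l$ are odd), all of which have already been established.
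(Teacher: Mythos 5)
Your proof is correct and is essentially the paper's own argument: the paper partitions $\GF(q)^*$ into cosets of $\GF(4)^*=\langle\de\rangle$, which are exactly the size-$3$ orbits you describe, and isolates the $x=0$ term contributing $1$. No substantive difference.
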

\begin{proof}
Let $D$ be a set of coset representatives of
$\GF(q)^*\big{/}\GF(4)^*$. Then
\[
\begin{array}{ll}
&T(a,b)=1+\sum\limits_{x\in D}\sum\limits_{i=0}^{2}(-1)^{\Tr_m(a(\de^i x)^{2^l+1}+b(\de^i x)^{2^k+1})}\\[2mm]
&\qquad\quad\;=1+3\cdot\sum\limits_{x\in
D}(-1)^{\Tr_m(ax^{2^l+1}+bx^{2^k+1})}\equiv 1\,(\mathrm{mod}\; 3).
\end{array}
\]
\end{proof}

Now we can decide the possible values of $T(a,b)$.

\begin{lem}\label{exp pol}
The exponential sum
\[
T(a,b)=\left\{
\begin{array}{ll}
-2^k, &\text{if}\;  \dim_{\GF(4)}V_m(a,b)=0,\\[2mm]
2^{k+1}, &\text{if}\;  \dim_{\GF(4)}V_m(a,b)=1,\\[2mm]
-2^{k+2}, &\text{if}\;  \dim_{\GF(4)}V_m(a,b)=2.
\end{array}
\right.
\]
\end{lem}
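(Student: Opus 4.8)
The plan is to treat $T(a,b)=\sum_{x\in\GF(q)}(-1)^{Q_{a,b}(x)}$ as the character sum attached to the binary quadratic form $Q_{a,b}$ in $m=2k$ variables and to combine Lemma~\ref{three forms} with the congruence of Lemma~\ref{T(a,b) equiv 1 mod 3}. The chain of equivalences displayed after (\ref{defn Vm(a,b)}) identifies the radical of $Q_{a,b}$ --- the space called $V_m$ in Lemma~\ref{three forms} --- with $V_m(a,b)=\{x\in\GF(q):L_{a,b}(x)=0\}$. Since $V_m(a,b)$ is a $\GF(4)$-vector space, its $\GF(2)$-dimension is even, say $2t$ where $t=\dim_{\GF(4)}V_m(a,b)\in\{0,1,2\}$ by the preceding lemma. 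With $2v$ denoting the codimension of the radical as in Lemma~\ref{three forms}, this gives $2v=m-2t=2(k-t)$, so $v=k-t$ and therefore $2^{m-v}=2^{k+t}$, which is $2^k$, $2^{k+1}$, or $2^{k+2}$ in the three respective cases.

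By Lemma~\ref{three forms}, $T(a,b)$ is then equal to $2^{k+t}$, $0$, or $-2^{k+t}$ according to whether $Q_{a,b}$ is of Type~I, Type~II, or Type~III; it remains to decide which. For this I would invoke Lemma~\ref{T(a,b) equiv 1 mod 3}, namely $T(a,b)\equiv 1\pmod 3$. Since $2\equiv-1\pmod 3$ we have $2^{k+t}\equiv(-1)^{k+t}\pmod 3$, and $k$ is odd, so $2^{k+t}\equiv(-1)^{1+t}\pmod 3$. For $t=0$ this is $2^k\equiv-1$, so among $\{2^k,0,-2^k\}$ only $-2^k$ satisfies the congruence; for $t=1$ it is $2^{k+1}\equiv 1$, so only $2^{k+1}$ does; for $t=2$ it is $2^{k+2}\equiv-1$, so only $-2^{k+2}$ does. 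In particular $0\not\equiv 1\pmod 3$ in every case, which rules out the Type~II alternative, and the value of $T(a,b)$ is forced to be exactly the one in the statement.

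The argument is essentially bookkeeping, and the two points needing care are: (a) that $V_m(a,b)$ really is the radical of $Q_{a,b}$ in the sense required by Lemma~\ref{three forms}, which is precisely what the displayed equivalences in Section~II establish; and (b) the parity of $k$, since it is the oddness of $k$ that makes the mod~$3$ congruence single out a unique admissible value in each of the three cases. An alternative would be to determine the Type of $Q_{a,b}$ directly, e.g.\ by examining whether a square term survives after diagonalisation, but the congruence route used here is shorter and avoids that analysis.
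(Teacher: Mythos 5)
Your proof is correct and follows essentially the same route as the paper: both translate $\dim_{\GF(4)}V_m(a,b)=t$ into $v=k-t$, apply Lemma~\ref{three forms} to get $T(a,b)\in\{2^{k+t},0,-2^{k+t}\}$, and then use the congruence $T(a,b)\equiv 1\pmod 3$ from Lemma~\ref{T(a,b) equiv 1 mod 3} (together with $k$ odd) to single out the stated value. Your write-up is in fact slightly more careful than the paper's in spelling out the mod-$3$ arithmetic and in explicitly ruling out the value $0$ in every case.
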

\begin{proof} If $\dim_{\GF(4)}V_m(a,b)=0$, then $m-2v=0$ and $v=k$. Hence by Lemma \ref{three forms} we obtain $T(a,b)=\pm 2^k$. Combining  Lemma \ref{T(a,b) equiv 1 mod 3} we  can deduce $T(a,b)=-2^k$.

If $\dim_{\GF(4)}V_m(a,b)=1$, then $\dim_{\GF(2)}V_m(a,b)=2$ and $m-2v=2$ which yields $v=k-1$. Hence by Lemma \ref{three forms} we obtain $T(a,b)=\pm 2^{k+1}$ or $0$. Combining  Lemma \ref{T(a,b) equiv 1 mod 3} we obtain $T(a,b)=2^{k+1}$.

Similarly, if $\dim_{\GF(4)}V_m(a,b)=2$, then $T(a,b)=-2^{k+2}$.
\end{proof}

To evaluate $S(a)$, we need to deal with $T(a,0), T(ra,\de)$ and
$T(r^{-1}a,\de^{-1})$ simultaneously.
\begin{lem}\label{T(a,0)}
For $a\in \GF(q)^*$, we have
\[
T(a,0)=\left\{
\begin{array}{ll}
2^{k+1}, &\text{if}\; a\;\text{is a cubic},\\[2mm]
-2^k, &\text{if}\; a\;\text{is not a cubic}.
\end{array}
\right.
\]
\end{lem}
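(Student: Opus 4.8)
The plan is to reduce $T(a,0)$ to a pure power‑sum character sum and then invoke Lemma~\ref{primitive}. From (\ref{defn T}) we have $T(a,0)=\sum_{x\in\GF(q)}(-1)^{\Tr_m(ax^{2^l+1})}$, so everything hinges on the image and fibres of the monomial map $x\mapsto x^{2^l+1}$ on $\GF(q)^*$.

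Next I would record the arithmetic fact $\gcd(2^l+1,2^m-1)=3$: since $2^l+1\mid 2^{2l}-1$ and $\gcd(2^{2l}-1,2^{2k}-1)=2^{2\gcd(l,k)}-1=3$, the gcd divides $3$, and it equals $3$ because $l$ odd forces $3\mid 2^l+1$ (this gcd is already used in the proof that $\de$ is primitive in $\GF(4)$). Consequently $x\mapsto x^{2^l+1}$ is $3$-to-$1$ from $\GF(q)^*$ onto the unique index-$3$ subgroup of $\GF(q)^*$, namely the set of nonzero cubes; likewise $x\mapsto x^3$ is $3$-to-$1$ onto the same set, since $m=2k$ even gives $3\mid 2^m-1$. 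Hence for any function $f$ on $\GF(q)$ one has $\sum_{x}f(x^{2^l+1})=f(0)+3\sum_{y\ \text{a nonzero cube}}f(y)=\sum_{x}f(x^3)$, and in particular $T(a,0)=\sum_{x\in\GF(q)}(-1)^{\Tr_m(ax^3)}$.

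Finally, because $k$ is odd we have $2^k\equiv 2\pmod 3$, so $3\mid 2^k+1$, and Lemma~\ref{primitive} applies with $h=3$: the sum $\sum_{x}(-1)^{\Tr_m(ax^3)}$ equals $(3-1)2^k=2^{k+1}$ when $a=g^{3i}$ for some $i$ (i.e.\ $a$ is a cube) and $-2^k$ otherwise, which is exactly the claimed dichotomy. There is essentially no obstacle here; the only points requiring care are the two gcd computations — the one identifying the image of $x\mapsto x^{2^l+1}$ with the cubes, and the one placing $3$ among the admissible exponents $h$ in Lemma~\ref{primitive}.
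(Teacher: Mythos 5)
Your argument is correct, but it is a genuinely different route from the paper's. The paper proves this lemma inside its quadratic-form framework: it counts the roots in $\GF(q)$ of the linearized polynomial $L_{a,0}(x)=a^{2^{2e}}x^{2^{2e}}+a^{2^{k+e}}x$ (reducing to $x^{2^{2e}-1}=(a^{2^l-1})^{2^{2e}}$, which has nonzero solutions precisely when $a$ is a cube, since $\gcd(2^{2e}-1,q-1)=3$ and $\gcd(2^l-1,q-1)=1$), thereby determining $\dim_{\GF(4)}V_m(a,0)\in\{0,1\}$, and then invokes Lemma \ref{exp pol}, which itself rests on the classification of binary quadratic forms together with the congruence $T(a,b)\equiv 1\pmod 3$. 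You instead exploit the fact that $b=0$ makes $T(a,0)$ a pure monomial sum: since $\gcd(2^l+1,q-1)=3=\gcd(3,q-1)$, both $x\mapsto x^{2^l+1}$ and $x\mapsto x^3$ are $3$-to-$1$ onto the nonzero cubes, so $T(a,0)=\sum_{x}(-1)^{\Tr_m(ax^3)}$, and Lemma \ref{primitive} with $h=3\mid 2^k+1$ finishes the job. Both gcd computations you flag are right, and your substitution argument is the same trick the paper itself uses later when evaluating the sums $A$ and $A+B$ in the computation of $N_2$. What each approach buys: yours is shorter and avoids the quadratic-form machinery entirely, but it only works for $b=0$; the paper's route is uniform with the treatment of $T(ra,\de)$ and $T(r^{-1}a,\de)$, where no such monomial reduction is available, so the lemma falls out of machinery that must be set up anyway.
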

\begin{proof}
In this case, $L_{a,0}(x)=a^{2^{2e}}x^{2^{2e}}+a^{2^{k+e}}x=0$ has
nonzero solution in $\GF(q)$ if and only if
\[x^{2^{2e}-1}=\left(a^{2^l-1}\right)^{2^{2e}}.\]
Since $\gcd(2^{2e}-1,q-1)=3$ and $\gcd(2^l-1,q-1)=1$, this equation
has nonzero solution if and only if $a$ is a cubic in $\GF(q)^*$. In
this case, it has exactly three nonzero solutions. Taking the
solution $x=0$ into account, we obtain that $L_{a,0}(x)=0$ has four or
one solutions in $\GF(q)$ depending on $a$ in cubic in $\GF(q)^*$ or
not. Therefore the result follows from Lemma \ref{exp pol}.
\end{proof}

\begin{lem}\label{T(ra,del) in cubic}
If $a$ is a nonzero cubic in $\GF(q)^*$, then
\[T(ra,\de)=T(r^{-1}a,\de^{-1})=-2^k\;\text{or}\; 2^{k+1}.\]
\end{lem}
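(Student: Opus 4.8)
The plan is to establish the two claims in turn: the equality $T(ra,\de)=T(r^{-1}a,\de^{-1})$, and then that this common value avoids $-2^{k+2}$, i.e.\ that $\dim_{\GF(4)}V_m(ra,\de)\le 1$.

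\emph{The equality.} I would use the two manifest symmetries of $T$. Replacing $x$ by $x^{2^k}$ and using Frobenius-invariance of $\Tr_m$ together with $2^{2k}\equiv 1\pmod{q-1}$ gives $T(a,b)=T(a^{2^k},b^{2^k})$ for all $a,b$; replacing $x$ by $tx$ gives $T(a,b)=T(at^{2^l+1},bt^{2^k+1})$ for all $t\in\GF(q)^*$. Since $r$ has multiplicative order $2^k+1$ we have $r^{2^k}=r^{-1}$, and since $k$ is odd $\de^{2^k}=\de^{-1}$; hence $T(ra,\de)=T(r^{-1}a^{2^k},\de^{-1})$. Now let $U=\{t\in\GF(q)^*:t^{2^k+1}=1\}$, a cyclic group of order $2^k+1$. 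On $U$ the map $t\mapsto t^{2^l+1}$ has image the subgroup of cubes, because $\gcd(2^l+1,2^k+1)=3$. Writing $a=w^3$ (possible since $a$ is a cube), one checks $a^{1-2^k}\in U$ and $a^{1-2^k}=(w^{1-2^k})^3$ with $w^{1-2^k}\in U$, so there is $t\in U$ with $t^{2^l+1}=a^{1-2^k}$; scaling by this $t$ fixes the second argument $\de^{-1}$ (as $t^{2^k+1}=1$) and sends the first to $r^{-1}a^{2^k}\cdot a^{1-2^k}=r^{-1}a$. This proves $T(ra,\de)=T(r^{-1}a,\de^{-1})$. The cube hypothesis is used precisely here: for a noncube $a$ the element $a^{1-2^k}$ is a noncube, so no such $t$ exists.

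\emph{The value.} By Lemma~\ref{exp pol} it suffices to exclude $\dim_{\GF(4)}V_m(ra,\de)=2$, i.e.\ $|V_m(ra,\de)|=16$. Put $c=ra$ and $e=k-l$, so $e$ is even and $\gcd(e,2k)=2$; since $\de\in\GF(4)$ one gets $\de^{2^{2k-l}}+\de^{2^{k-l}}=\de^2+\de=1$, hence $L_{c,\de}(x)=c^{2^{2e}}x^{2^{2e}}+x^{2^{e}}+c^{2^{k+e}}x$. Dividing by $x$ and substituting $z=x^{2^{e}-1}$ — a $3$-to-$1$ map from $\GF(q)^*$ onto the subgroup of cubes, since $\gcd(2^{e}-1,2^{2k}-1)=3$ — turns $L_{c,\de}(x)=0$ into $c^{2^{2e}}z^{2^e+1}+z+c^{2^{k+e}}=0$; then $z=c^{2^{k+e}}w$ together with division by $c^{2^{k+e}}$, using $c^{2^k+1}=r^{2^k+1}a^{2^k+1}=N_{\GF(q)/\GF(2^k)}(a)\in\GF(2^k)^*$, gives the trinomial equation $\mu w^{2^e+1}+w+1=0$ with $\mu=N_{\GF(q)/\GF(2^k)}(a)^{2^{2e}}\in\GF(2^k)^*$. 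Tracking the constraint ``$z$ is a cube'' through these substitutions, the nonzero elements of $V_m(ra,\de)$ correspond $3$-to-$1$ to the solutions $w$ of this trinomial lying in one \emph{fixed} noncube coset $\mathcal N_0$ of $\GF(q)^*$ modulo cubes (fixed because $a$ is a cube, $r$ a noncube, and $2^{k+e}$ is odd, so $\mathcal N_0$ is just the coset of $r$). Thus $|V_m(ra,\de)|=16$ would force exactly $5$ solutions of $\mu w^{2^e+1}+w+1=0$ inside $\mathcal N_0$. But the classical bound of Bluher on trinomials gives at most $2^{\gcd(e,2k)}+1=5$ solutions of such a trinomial in all of $\GF(q)$, while the coefficients $\mu,1,1$ all lie in $\GF(2^k)$, so its solution set is stable under $w\mapsto w^{2^k}$ — a map carrying $\mathcal N_0$ onto the \emph{other} noncube coset. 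Hence the solution set cannot consist of $5$ elements all in $\mathcal N_0$, so the number of solutions in $\mathcal N_0$ lies in $\{0,1\}$, $|V_m(ra,\de)|\in\{1,4\}$, and $T(ra,\de)\in\{-2^k,2^{k+1}\}$ by Lemma~\ref{exp pol}.

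\emph{Main obstacle.} The delicate part is the second half of the value argument: carrying the multiplicative coset of the unknown correctly through the two substitutions $x\mapsto z\mapsto w$ (so that $\mathcal N_0$ is seen to be one fixed noncube coset — the second, essential use of the cube hypothesis), and then combining the arithmetic bound $2^{\gcd(e,2k)}+1=5$ on the number of roots of the trinomial with the conjugation symmetry $w\mapsto w^{2^k}$ to rule out the degenerate count $5$. Everything else — the two symmetries of $T$, the gcd computations $\gcd(e,2k)=2$, $\gcd(2^e-1,2^{2k}-1)=3$, $\gcd(2^l+1,2^k+1)=3$, and the identities $\de^{2^{2k-l}}+\de^{2^{k-l}}=1$ and $c^{2^k+1}\in\GF(2^k)^*$ — is routine bookkeeping.
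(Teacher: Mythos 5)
Your proposal is correct, and while the first half matches the paper, the second half takes a genuinely different route. For the equality $T(ra,\de)=T(r^{-1}a,\de^{-1})$ you and the paper do essentially the same thing: apply the Frobenius symmetry $x\mapsto x^{2^k}$, which sends $(ra,\de)$ to $(r^{-1}a^{2^k},\de^{-1})$, and then use the cube hypothesis to rescale the first argument back to $r^{-1}a$; the paper packages the rescaling as a B\'ezout reduction of $a$ to an element of $\GF(2^k)^*$, you package it as multiplication by a norm-one element $t$ with $t^{2^l+1}=a^{1-2^k}$ --- same content. For excluding $\dim_{\GF(4)}V_m(ra,\de)=2$, however, the paper argues directly: from two $\GF(4)$-independent zeros $x_1,x_2$ of $L_{ra,\de}$ it eliminates the middle term to obtain an identity of the form $\bigl(x_1^{2^e}x_2+x_1x_2^{2^e}\bigr)^{2^e-1}=(ra)^{2^e(2^k-2^e)}$, whose left side is a cube (since $3\mid 2^e-1$) while the right side is not (since $ra$ is a noncube and the exponent is $1$ modulo $3$) --- a short, self-contained contradiction. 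You instead pass to the trinomial $\mu w^{2^e+1}+w+1$, invoke the $\{0,1,2,2^{\gcd(e,2k)}+1\}$ root-count theorem, and rule out the count $5$ via the $w\mapsto w^{2^k}$ coset swap. Your coset bookkeeping checks out (two small points: ``$2^{k+e}$ is odd'' should read ``$k+e$ is odd'', and since $\GF(2^{2k})$ is not an extension of $\GF(2^e)$ the bound you need is the Helleseth--Kholosha generalization rather than Bluher's original statement). The trade-off: the paper's elimination needs no outside input and uses only the existence of two independent kernel elements, whereas your argument imports a nontrivial external theorem but in exchange exposes the projective-trinomial structure and pins down all possible sizes of $V_m(ra,\de)$, not just the exclusion of $16$.
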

\begin{proof}
Firstly we show that $T(ra,\de)=T(r^{-1}a,\de^{-1})$. Note that
$\de^{2^k}=\de^2=\de^{-1}$ and $\de+\de^2=1$. We can reformulate
\[T(ra,\de)=\sum\limits_{x\in
\GF(q)}(-1)^{\Tr_m\left(rax^{2^l+1}\right)+\Tr_k\left(x^{2^k+1}\right)}\]
and
\[T(r^{-1}a,\de)=\sum\limits_{x\in
\GF(q)}(-1)^{\Tr_m\left(r^{-1}ax^{2^l+1}\right)+\Tr_k\left(x^{2^k+1}\right)}.\]
Since $a$ is a nonzero cubic in
$\GF(q)$, we assume $a=g^{3s}$ for some integer $s$. It is easy to
see that $\gcd((2^k-1)(2^l+1), 2^k+1)=3$. Hence there exists
integers $i$ and $j$ satisfying
\[(2^k-1)(2^l+1)i+(2^k+1)j=3s.\]
By substituting $x=g^{-(2^k-1)i}y$, we obtain
$ax^{2^l+1}=g^{3s-(2^k-1)(2^l+1)i}y^{2^l+1}=g^{(2^k+1)j}y^{2^l+1}$ and $x^{2^k+1}=y^{2^k+1}$.
Denote by $b=g^{(2^k+1)j}\in \GF(2^k)^*$. Then $T(ra,\de)=T(rb,\de)$
 and $T(r^{-1}a,\de)=T(r^{-1}b,\de)$. Moreover
 \[\begin{array}{ll}
 &T(rb,\de)=\sum\limits_{x\in
\GF(q)}(-1)^{\Tr_m\left((rbx^{2^l+1})^{2^k}\right)+\Tr_k\left(x^{2^k+1}\right)}\\[2mm]
&\qquad =\sum\limits_{x\in
\GF(q)}(-1)^{\Tr_m\left(r^{-1}bx^{2^k(2^l+1)}\right)+\Tr_k\left(x^{2^k+1}\right)}=T(r^{-1}b,\de).
\end{array}
\]
Therefore $T(ra,\de)=T(rb,\de)=T(r^{-1}b,\de)=T(r^{-1}a,\de)$. From now on we may assume $a\in \GF(2^k)$.

Secondly we show that $T(ra,\de)\ne -2^{k+2}$ which is equivalent to
saying that $\dim_{\GF(4)}V_{ra,\de}\neq 2$.
Assume, on the contrary, that $\dim_{\GF(4)}V_{ra,\de}=2$. Then
there exists $x_1, x_2$ with $x_1\neq x_2, \de x_2, \de^2 x_2$.
Thereafter
\[(ra)^{2^{2e}}x_1^{2^{2e}}+x_1^{2^e}+(ra)^{2^{k+e}}x_1=(ra)^{2^{2e}}x_2^{2^{2e}}+x_2^{2^e}+(ra)^{2^{k+e}}x_2=0\]
which yields
\[\left((ra)^{2^{2e}}x_1^{2^{2e}}+(ra)^{2^{k+e}}x_1\right) x_2^{2^e}=\left((ra)^{2^{2e}}x_2^{2^{2e}}+(ra)^{2^{k+e}}x_2\right) x_1^{2^e}.\]
A routine calculation implies that
\[a^{2^e(2^e-1)}\left(x_1^{2^e}x_2+x_1x_2^{2^e}\right)^{1-2^e}=r^{2^e(2^e+1)}.\]
The left hand side is a cubic in $\GF(q)$. But the right hand side
is not since $r=g^{2^k-1}$ and $3\nmid \gcd((2^k-1)(2^e+1),q-1)=2^k-1$. It leads to a
contradiction.
\end{proof}

In the sequel we will consider $T(ra,\de)$ and $T(r^{-1}a,\de)$ in
the case $a$ is noncubic.
\begin{lem}\label{T noncubic}
If $a$ is a noncubic, then at least one of $T(ra,\de)$ and
$T(r^{-1}a,\de)$ is equal to $-2^k$.
\end{lem}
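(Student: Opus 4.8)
The plan is to argue by contradiction in the spirit of the proof of Lemma~\ref{T(ra,del) in cubic}, now eliminating between the two \emph{different} linearized polynomials attached to $ra$ and to $r^{-1}a$.

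By Lemma~\ref{exp pol} the statement is equivalent to: for noncubic $a$, at least one of the $\GF(4)$-spaces $V_m(ra,\de)$, $V_m(r^{-1}a,\de)$ is trivial. So I would assume both are nontrivial and fix nonzero $x_1\in V_m(ra,\de)$, $x_2\in V_m(r^{-1}a,\de)$. Writing $e=k-l$ and using $\de^{2^{k+e}}+\de^{2^e}=\de^2+\de=1$ (so that, exactly as in the proof of Lemma~\ref{T(ra,del) in cubic}, the middle coefficient of the pertinent linearized polynomial equals $1$), these memberships read
\[
x_1^{2^e}=(ra)^{2^{2e}}x_1^{2^{2e}}+(ra)^{2^{k+e}}x_1,\qquad
x_2^{2^e}=(r^{-1}a)^{2^{2e}}x_2^{2^{2e}}+(r^{-1}a)^{2^{k+e}}x_2 .
\]
Next I would multiply the first relation by $x_2^{2^e}$, the second by $x_1^{2^e}$, and equate the two expressions obtained for $x_1^{2^e}x_2^{2^e}$. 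Using $r^{2^k}=r^{-1}$ (hence $r^{2^{k+e}}=r^{-2^e}$ and $r^{2^{2e}}=(r^{2^e})^{2^e}$) and pulling out the powers of $a$, a routine computation collapses the resulting identity to
\[
a^{2^{2e}}\,\Phi^{2^e}=a^{2^{k+e}}\,\Phi,\qquad \Phi:=r^{2^e}x_1^{2^e}x_2+r^{-2^e}x_1x_2^{2^e},
\]
and, dividing by $a^{2^{2e}}\neq0$, to $\Phi^{2^e}=a^{2^{2e}(2^l-1)}\Phi$.

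The decisive step — the only one where the noncubeness of $a$ enters — is then a cube/noncube check in two cases. If $\Phi\neq0$, then $\Phi^{2^e-1}=a^{2^{2e}(2^l-1)}$: the left-hand side is a cube since $\gcd(2^e-1,2^m-1)=2^{\gcd(e,m)}-1=3$, while $2^{2e}(2^l-1)\equiv1\pmod 3$ makes the right-hand side a noncube, a contradiction. If $\Phi=0$, then cancelling $x_1x_2$ gives $(x_2/x_1)^{2^e-1}=r^{2^{e+1}}$, again a cube on the left, whereas $r^{2^{e+1}}=g^{(2^k-1)2^{e+1}}$ is a noncube because $(2^k-1)2^{e+1}\equiv2\pmod 3$ — a contradiction. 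In either case the assumption fails, so one of $V_m(ra,\de)$, $V_m(r^{-1}a,\de)$ is trivial, and Lemma~\ref{exp pol} yields the value $-2^k$ for the corresponding sum.

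I expect the main obstacle to be precisely this last dichotomy: in the cubic case of Lemma~\ref{T(ra,del) in cubic} only the branch $\Phi=0$ could be ruled out, and one must notice that for noncubic $a$ the branch $\Phi\neq0$ \emph{also} contradicts the cube/noncube split, because the exponent $2^{2e}(2^l-1)$ is $\equiv1\pmod 3$, so $a^{2^{2e}(2^l-1)}$ inherits the noncubeness of $a$. Organizing the elimination so that it lands on the symmetric quantity $\Phi=r^{2^e}x_1^{2^e}x_2+r^{-2^e}x_1x_2^{2^e}$ and on the clean relation $\Phi^{2^e}=a^{2^{2e}(2^l-1)}\Phi$ is the computational heart of the argument.
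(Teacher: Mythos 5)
Your proof is correct and follows essentially the same route as the paper: assume both linearized polynomials $L_{ra,\de}$ and $L_{r^{-1}a,\de}$ have nontrivial kernels in $\GF(q)$, eliminate the middle terms to reach $\Phi^{2^e-1}=a^{2^{2e}(2^l-1)}$, and contradict the noncubeness of $a$ via $\gcd(2^e-1,q-1)=3$. The only difference is that you explicitly rule out the degenerate branch $\Phi=0$ (again by a cube/noncube count, now on $r^{2^{e+1}}$), a case the paper's ``it can be transformed to'' silently passes over; this is a minor but genuine tightening of the same argument.
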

\begin{proof}
It suffices to show that at least one of $L_{ra,\de}(x)=0$ and
$L_{r^{-1}a,\de}(x)=0$ has only one solution $x=0$ in $\GF(q)$. Indeed, assume
there exist $x_1,x_2\in \GF(q)^*$ such that
\[(ra)^{2^{2e}}x_1^{2^{2e}}+x_1^{2^e}+(ra)^{2^{k+e}}x_1=(r^{-1}a)^{2^{2e}}x_2^{2^{2e}}+x_2^{2^e}+(r^{-1}a)^{2^{k+e}}x_2=0\]
which implies that
\[\left((ra)^{2^{2e}}x_1^{2^{2e}}+(ra)^{2^{k+e}}x_1\right)x_2^{2^e}=\left((r^{-1}a)^{2^{2e}}x_2^{2^{2e}}+(r^{-1}a)^{2^{k+e}}x_2\right)x_1^{2^e}.\]
It can be transformed to
\[\left(r^{-2^e}x_1^{2^e}x_2+r^{2^e}x_1x_2^{2^e}\right)^{2^e-1}=a^{2^{2e}(2^l-1)}.\]
Note that $\gcd(2^e-1,q-1)=3$ and $\gcd(2^l-1,q-1)=1$. Then $a$ must be
a cubic which is a contradiction.
\end{proof}

\section{Proof of main result}

Now we are ready to give all the possible values of $S(a)$ for $a\in
\GF(q)^*$.

\begin{lem}\label{T(ra,de) in noncubic}
For $a\in \GF(q)^*$, the possible values of $S(a)$ are $0, 2^{k+1},
-2^{k+1}$ and $-2^k$. Precisely,
\begin{itemize}
  \item[Case I:] if $a$ is nonzero cubic, then $S(a)=0$ or $2^{k+1}$;
  \item[Case II:] if $a$ is noncubic, then $S(a)=0, -2^k$ or $-2^{k+1}$.
\end{itemize}
\end{lem}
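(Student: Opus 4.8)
The plan is to combine the identity $3S(a)=T(a,0)+T(ra,\de)+T(r^{-1}a,\de)$ from (\ref{first form S}) with the values of the three summands computed in Lemmas \ref{T(a,0)}, \ref{T(ra,del) in cubic}, \ref{T noncubic} and \ref{exp pol}, and then eliminate the combinations forbidden by the congruence $S(a)\equiv 1\pmod 3$ (which follows from the analogous congruence for $T$, Lemma \ref{T(a,b) equiv 1 mod 3}, since each $T$ is $\equiv 1\pmod 3$, hence $3S(a)\equiv 3\pmod 9$).

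For \textbf{Case I}, suppose $a$ is a nonzero cubic. Then Lemma \ref{T(a,0)} gives $T(a,0)=2^{k+1}$, and Lemma \ref{T(ra,del) in cubic} gives $T(ra,\de)=T(r^{-1}a,\de)\in\{-2^k,\,2^{k+1}\}$. So $3S(a)=2^{k+1}+2T(ra,\de)$ takes one of the two values $2^{k+1}-2^{k+1}=0$ or $2^{k+1}+2^{k+2}=3\cdot 2^{k+1}$, giving $S(a)=0$ or $S(a)=2^{k+1}$ respectively. (One should note that $2^{k+1}-2\cdot 2^k=0$ is divisible by $3$ and $3\cdot 2^{k+1}$ is obviously divisible by $3$, so the congruence constraint is automatically satisfied and no case is excluded here.)

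For \textbf{Case II}, suppose $a$ is noncubic. Then Lemma \ref{T(a,0)} gives $T(a,0)=-2^k$, and Lemma \ref{T noncubic} says at least one of $T(ra,\de)$, $T(r^{-1}a,\de)$ equals $-2^k$; say $T(r^{-1}a,\de)=-2^k$. By Lemma \ref{exp pol} the remaining term $T(ra,\de)\in\{-2^k,\,2^{k+1},\,-2^{k+2}\}$. Hence $3S(a)=-2^{k+1}+T(ra,\de)$ lies in $\{-3\cdot 2^k,\ 2^{k+1}-2^{k+1}=0,\ -2^{k+1}-2^{k+2}=-3\cdot 2^{k+1}\}$, so $S(a)\in\{-2^k,\,0,\,-2^{k+1}\}$; each of these three is an integer (i.e. each candidate value of $3S(a)$ is divisible by $3$), so again nothing is forced out and all three survive as a priori possibilities. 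The roles of $ra$ and $r^{-1}a$ are symmetric, so the same list results if instead $T(ra,\de)=-2^k$.

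The one point that needs a little care — and is the only real obstacle — is making sure the case $T(ra,\de)=T(r^{-1}a,\de)=-2^{k+2}$ cannot occur in Case II in a way that would produce a spurious value: if both equalled $-2^{k+2}$ we would get $3S(a)=-2^k-2^{k+3}=-9\cdot 2^k$, i.e. $S(a)=-3\cdot 2^k$, which is not in the claimed list. But this is exactly ruled out by Lemma \ref{T noncubic}, which guarantees one of the two is $-2^k$; so once that lemma is invoked, the enumeration above is exhaustive and the proof is complete. (A symmetric remark applies in Case I, where Lemma \ref{T(ra,del) in cubic} already forbids the value $-2^{k+2}$ for the cubic terms.) It therefore only remains to assemble these observations into the statement.
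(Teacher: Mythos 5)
Your proof is correct and takes essentially the same route as the paper: combine $3S(a)=T(a,0)+T(ra,\delta)+T(r^{-1}a,\delta)$ from (\ref{first form S}) with Lemmas \ref{exp pol}, \ref{T(a,0)}, \ref{T(ra,del) in cubic} and \ref{T noncubic} and enumerate the admissible triples. The only blemish is your parenthetical claim that $S(a)\equiv 1\pmod 3$: knowing each $T\equiv 1\pmod 3$ only gives $3S(a)\equiv 0\pmod 3$ (not a congruence mod $9$), and indeed $S(a)=0$ violates it --- but since you never actually use this congruence to discard a case, the argument stands as written.
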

\begin{proof}
If $a$ is a nonzero cubic, then $T(a)=2^{k+1}$. From Lemma \ref{T(ra,del) in cubic} we have $T(ra,\de)=T(r^{-1}a,\de)=-2^k$ or $2^{k+1}$. As a consequence,
$S(a)=\left(2^{k+1}-2^k-2^k\right)/3=0$ or $S(a)=\left(2^{k+1}+2^{k+1}+2^{k+1}\right)/3=2^{k+1}$.

If $a$ is a noncubic, then $T(a)=-2^k$. By Lemma \ref{T noncubic} we
obtain $S(a)=\left(-2^k-2^k-2^k\right)/3=-2^k$ or
$S(a)=\left(-2^k-2^{k}+2^{k+1}\right)/3=0$ or
$S(a)=$ \\$\left(-2^k-2^{k}-2^{k+2}\right)/3=-2^{k+1}$.
\end{proof}

When $a$ runs through $\GF(q)^*$, suppose $S(a)$ takes on the value
zero $N_0$ times, $-2^k$ is taken on $N_1$ times, $2^{k+1}$ is taken
on $N_2$ times and $-2^{k+1}$ is taken on $N_3$ times. Since
$2^{k+1}$ only occurs in Case I (see Lemma \ref{T(ra,de) in
noncubic}), then $N_2$ can be calculated directly.
\begin{lem}
\[N_2=\frac{(2^k+1)^2}{9}.\]
\end{lem}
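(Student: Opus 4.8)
The plan is to count directly how often $S(a) = 2^{k+1}$, using the characterization in Lemma~\ref{T(ra,de) in noncubic} that this value occurs precisely for those nonzero cubes $a$ with $T(a,0) = T(ra,\de) = T(r^{-1}a,\de) = 2^{k+1}$, i.e. for cubes $a$ with $\dim_{\GF(4)} V_m(ra,\de) = 1$ (equivalently $\dim_{\GF(4)}V_m(r^{-1}a,\de) = 1$, which by Lemma~\ref{T(ra,del) in cubic} is the same condition). First I would recall from the proof of Lemma~\ref{T(ra,del) in cubic} that for a nonzero cube $a$ the value $T(ra,\de)$ depends only on the coset of $a$ modulo cubes twisted by the subgroup $g^{(2^k+1)\Z}$; more precisely the substitution $x = g^{-(2^k-1)i}y$ reduces $T(ra,\de)$ to $T(rb,\de)$ with $b = g^{(2^k+1)j}\in\GF(2^k)^*$. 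So as $a$ ranges over the $\frac{2^m-1}{3}$ nonzero cubes, $T(ra,\de)$ takes each of its values (only $-2^k$ and $2^{k+1}$ by Lemma~\ref{T(ra,del) in cubic}) with multiplicity proportional to how many cubes map to a given $b\in\GF(2^k)^*$.

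Next I would set up the counting over $b\in\GF(2^k)^*$ directly: determine for how many $b\in\GF(2^k)^*$ one has $T(rb,\de) = 2^{k+1}$, then multiply by the size of the fiber. The cleanest route is a moment computation restricted to cubes. Let $M$ denote the number of nonzero cubes $a$ with $T(ra,\de) = 2^{k+1}$; then $N_2 = M$ since for cubes the three conditions coincide and $T(a,0)=2^{k+1}$ automatically. To find $M$, I would compute $\sum_{a \text{ cube}} T(ra,\de)$ and use that $T(ra,\de)\in\{-2^k, 2^{k+1}\}$ on this range, giving one linear equation in $M$ and the count of cubes; the sum $\sum_{a\text{ cube}}T(ra,\de)$ is evaluated by writing $a = c^3$ (or summing over all $a$ and dividing, exploiting that $T(r\de^i a,\de) = T(ra,\de)$), swapping the order of summation, and applying Lemma~\ref{primitive} to the inner character sum over $x^{2^k+1}$-values — noting $2^k+1 \mid 2^m - 1$ and that the relevant exponent sums collapse to known Gauss-type evaluations. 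This should yield $M = \frac{(2^k+1)^2}{9}$ after simplification.

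The main obstacle I anticipate is the bookkeeping in the moment sum: one must carefully handle the interaction between "being a cube in $\GF(q)^*$" and the term $bx^{2^k+1}$ with $b$ effectively lying in $\GF(2^k)^*$, and make sure the fiber sizes of $a\mapsto b$ are counted with the correct multiplicity (a factor involving $\gcd((2^k-1)(2^l+1), 2^k+1) = 3$ and $|\GF(2^k)^*| = 2^k - 1$). A clean alternative that avoids the reduction to $\GF(2^k)$ is to observe that $2^{k+1}$ occurs in $S$ only in Case~I and only when all three summands equal $2^{k+1}$, so $N_2$ equals the number of $a\in\GF(q)^*$ with $S(a) = 2^{k+1}$; one can then compute $\sum_{a} (S(a) - x_1)(S(a) - x_2)(S(a) - x_3)$-type expressions, but with only the first two moments from Lemma~\ref{moment} available at this stage that would be under-determined, so the direct fiber count above is the right path. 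I would therefore commit to: (1) reduce $T(ra,\de)$ over cubes to a sum over $b\in\GF(2^k)^*$ with known fiber size; (2) evaluate $\sum_{b\in\GF(2^k)^*} T(rb,\de)$ via Lemma~\ref{primitive}; (3) solve the resulting linear relation for the number of $b$ with $T(rb,\de)=2^{k+1}$; (4) multiply by the fiber size to obtain $N_2 = \frac{(2^k+1)^2}{9}$.
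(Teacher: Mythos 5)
Your proposal is correct in substance and would reach $N_2=\frac{(2^k+1)^2}{9}$, but it routes the count through $T(ra,\de)$ where the paper works with $S(a)$ directly, and this makes your version strictly more roundabout. The paper's proof uses only the observation (Case I of Lemma~\ref{T(ra,de) in noncubic}) that $S(a)\in\{0,2^{k+1}\}$ on the nonzero cubes, so that $2^{k+1}N_2=\sum_{a\ \mathrm{cube}}S(a)$ with no linear equation to solve; it then replaces the cube indicator by $\frac{1}{3}\sum_{b\in\GF(q)^*}$ over $a=b^3$, swaps the order of summation, and evaluates the two resulting pieces $A=\sum_{x\ \mathrm{cube}}(-1)^{\Tr_m(x^d)}=\frac{q-1}{3}$ and $A+B=\sum_{x\neq 0}(-1)^{\Tr_m(x^d)}$ via Lemma~\ref{primitive}. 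Your ``cleanest route'' --- counting cubes $a$ with $T(ra,\de)=2^{k+1}$ by computing $\sum_{a\ \mathrm{cube}}T(ra,\de)$ and using the two-valuedness from Lemma~\ref{T(ra,del) in cubic} --- is an affine reparametrization of the same first moment, since on cubes $3S(a)=2^{k+1}+2T(ra,\de)$; it works, but costs an extra linear solve against the known number $\frac{q-1}{3}$ of cubes. Your committed four-step plan adds a further reduction to $b\in\GF(2^k)^*$, and there the fiber-size issue you flag is real but resolvable: the fiber of $a\mapsto b$ has uniform size $\frac{(q-1)}{\gcd((2^k-1)(2^l+1),\,q-1)}=\frac{2^k+1}{3}$ because $\gcd((2^k-1)(2^l+1),(2^k-1)(2^k+1))=3(2^k-1)$, and the inner sum over $b\in\GF(2^k)^*$ then requires showing $\Tr_{m/k}(rx^{2^l+1})\neq 0$ for all $x\neq 0$ (which holds since $r^{-1}\GF(2^k)^*$ contains no cubes). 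All of this is avoidable: the paper's choice to sum $S(a)$ itself over the cubes collapses these steps into one application of Lemma~\ref{primitive}. So the essential inputs are identical in both arguments, and the only genuine difference is that your path introduces bookkeeping the paper does not need.
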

\begin{proof}
\begin{eqnarray}\label{eqn N2}
&&2^{k+1} N_2=\sum\limits_{a\,\text{nonzero cubic}}\sum\limits_{x\in
\GF(q)}(-1)^{\Tr_m\left(x^d+ax\right)}\nonumber\\[2mm]
&&\qquad =\frac{1}{3}\sum\limits_{x\in
\GF(q)}(-1)^{\Tr_m\left(x^d\right)}\sum\limits_{b\in
\GF(q)^*}(-1)^{\Tr_m\left(b^3x\right)}\nonumber\\[2mm]
&& \qquad
=\frac{1}{3}\left(q-1+(2^{k+1}-1)\sum\limits_{x\,\text{nonzero
cubic}}(-1)^{\Tr_m\left(x^d\right)}+(-2^{k}-1)\sum\limits_{x\,\text{non
cubic}}(-1)^{\Tr_m\left(x^d\right)}\right)\nonumber\\[2mm]
&& \qquad =\frac{1}{3}\left(q-1+(2^{k+1}-1)\cdot A+(-2^{k}-1)\cdot
B\right)
\end{eqnarray}

where $A=\sum\limits_{x\,\text{nonzero
cubic}}(-1)^{\Tr_m\left(x^d\right)}$ and
$B=\sum\limits_{x\,\text{non cubic}}(-1)^{\Tr_m\left(x^d\right)}$.

Since when $x$ runs through $\GF(q)^*$, $x^{2^l+1}$ runs through each nonzero cubic in $\GF(q)$ exactly three times, then we can calculate
\[A=\frac{1}{3}\sum\limits_{y\in \GF(q)^*}(-1)^{\Tr_m\left(y^{2^k+1}\right)}=\frac{q-1}{3}\]
and
\[A+B=\sum\limits_{x\in \GF(q)^*}(-1)^{\Tr_m\left(x^d\right)}=\sum\limits_{x\in \GF(q)^*}(-1)^{\Tr_m\left(x^{(2^k+1)/3}\right)}=2^k(2^k-2)/3-1.\]
Substituting $A$ and $B$ into (\ref{eqn N2}), we obtain
\begin{equation}\label{N2}
N_2=\frac{(2^k+1)^2}{9}.
\end{equation}
\end{proof}

Now we are ready to determine the cross correlation of the sequences $u$ defined in (\ref{def u}) and
$v$ defined in (\ref{def v}).

{\it Proof of Theorem 1:}
Recall the definitions of $N_i$ ($0\leq i\leq 2$) and the value of
$N_2$ in (\ref{N2}). Then we have

\begin{equation}\label{moment 0}
N_0+N_1+N_3=2^{2k}-1-N_2=\frac{(2^k+1)(2^{k+3}-10)}{9}.
\end{equation}
From Lemma \ref{moment} we obtain
\begin{equation}\label{moment 1}
N_1+2N_3=\frac{2(2^k+1)(2^{k}-2)}{9}
\end{equation}

\begin{equation}\label{moment 2}
N_1+4N_3=\frac{4(2^k+1)(2^{k}-2)}{9}.
\end{equation}

Solving the system equations consisting of (\ref{moment
0})-(\ref{moment 2}), we can calculate
\[N_0=\frac{(2^k+1)(7\cdot 2^{k}+8)}{9},\qquad N_1=0, \qquad N_3=\frac{(2^k+1)(2^{k}-2)}{9}.\qquad\square\]
Combing Lemma (\ref{N2}) we complete the proof of Theorem \ref{main result}.

\section{Conclusion}

\quad\; In this paper, we studied the cross correlation between one $m$-sequence of length $2^{2k}-1$ and its decimated sequence of
length $3(2^k-1)$. The cross correlation has three possible values: $0$, $2^{k+1}$, $-2^{k+1}$. Moreover, the cross correlation distribution is
also determined.




%

\end{document}